\documentclass[11pt,letterpaper]{article}

\usepackage[T1]{fontenc}
\usepackage[utf8]{inputenc}
\usepackage[margin=1in]{geometry}
\usepackage{amsmath,amssymb,amsthm}
\usepackage{mathtools}
\usepackage{bm}
\usepackage{booktabs}
\usepackage{multirow}
\usepackage{array}
\usepackage{xcolor}
\usepackage{graphicx}
\usepackage{caption}
\usepackage{subcaption}
\usepackage{enumitem}
\usepackage{mdframed}
\usepackage{parskip}
\usepackage{float}
\usepackage{placeins}
\usepackage{afterpage}

\usepackage{tikz}
\usetikzlibrary{shapes.geometric,arrows.meta,positioning,calc}
\usepackage[numbers,sort&compress]{natbib}
\usepackage[colorlinks=true,
            linkcolor=blue!60!black,
            citecolor=blue!60!black,
            urlcolor=blue!60!black,
            pdftitle={Current-Sheet Formation in Electron Magnetohydrodynamics with Split Fractional Dissipation},
            pdfauthor={Hu, Peng, Yang}]{hyperref}

\numberwithin{equation}{section}
\numberwithin{figure}{section}
\numberwithin{table}{section}

\theoremstyle{plain}
\newtheorem{theorem}{Theorem}[section]

\newtheorem{proposition}[theorem]{Proposition}

\theoremstyle{definition}

\newcommand{\T}{\mathbb{T}}
\newcommand{\R}{\mathbb{R}}
\newcommand{\Z}{\mathbb{Z}}
\newcommand{\bB}{\bm{B}}
\newcommand{\bJ}{\bm{J}}
\newcommand{\nperp}{\nabla^{\perp}}
\newcommand{\curl}{\nabla\times}
\newcommand{\Div}{\nabla\cdot}

\newcommand{\Lal}{\Lambda^{\alpha}}
\newcommand{\Lbe}{\Lambda^{\beta}}
\newcommand{\norm}[2]{\left\|#1\right\|_{#2}}
\newcommand{\Linf}{L^{\infty}}
\newcommand{\Ltwo}{L^{2}}
\newcommand{\Hs}{H^{s}}
\newcommand{\ie}{\textit{i.e.}}
\newcommand{\eg}{\textit{e.g.}}
\newcommand{\sect}[1]{Section~\ref{#1}}

\begin{document}

\title{\textbf{Current-Sheet Formation in Electron Magnetohydrodynamics
with Split Fractional Dissipation}}


\author{Ruimeng Hu\thanks{Department of Mathematics and
  Department of Statistics and Applied Probability,
  University of California, Santa Barbara, CA 93106, USA.
  \texttt{rhu@ucsb.edu}}
\and
Qirui Peng\thanks{Department of Mathematics,
  University of California, Santa Barbara, CA 93106, USA.
  \texttt{qpeng9@ucsb.edu}}
\and
Xu Yang\thanks{Department of Mathematics,
  University of California, Santa Barbara, CA 93106, USA.
  \texttt{xy6@ucsb.edu}}}

\date{}
\maketitle
\thispagestyle{plain}

\begin{abstract}
Thin current sheets are central small-scale structures in electron magnetohydrodynamics (EMHD), closely associated with energy dissipation and fast magnetic reconnection at electron scales. We study their formation numerically in a $2\frac{1}{2}$-dimensional EMHD system on a periodic domain with split fractional dissipation, where the magnetic potential and the vertical magnetic component are damped separately. The local theory is governed by a symmetric combined damping balance, but the numerical onset of small-scale growth need not follow this symmetry. A scaling analysis identifies the out-of-plane current as the primary concentration observable, since it is regularized only through the magnetic-potential equation. Using a validated Fourier pseudospectral exponential time-differencing solver with resolution-controlled diagnostics, we find a clear decay/concentration dichotomy. The onset boundary is markedly asymmetric: current-sheet formation appears to be controlled mainly by damping of the magnetic potential, rather than by the combined damping strength. The analyticity strip collapses to the grid scale, the concentration sharpens under grid refinement, and the observed growth is consistent with an energy-critical self-similar rate, with exponent near three. These experiments indicate that magnetic-potential damping is the apparent binding constraint for current-sheet concentration, refining the symmetric sum picture.
\end{abstract}

\bigskip
\noindent\textbf{Keywords:} Electron magnetohydrodynamics; Hall effect; fractional dissipation; current-sheet formation; self-similar scaling.

\medskip
\noindent\textbf{Mathematics Subject Classification:} 76W05, 35Q35, 65M70, 35R11.

\section{Introduction}\label{sec:intro}

Thin current sheets are the central coherent structures of electron magnetohydrodynamics (EMHD), the fluid model of plasma dynamics at scales below the ion inertial length, where the magnetic field is frozen into the electron flow~\citep{GKR1994}. Their formation is closely associated with energy dissipation and fast magnetic reconnection at electron scales, and simulations of electron-MHD turbulence show the current density organizing into sheet-like structures across scales~\citep{BSD1996,BSZD1999}. This paper asks a quantitative question about the gate to this process: when the two magnetic degrees of freedom of the $2\frac{1}{2}$-dimensional model are damped through separate dissipation channels, which channel controls the onset of current-sheet formation? The split-dissipation system itself is a recently introduced model: its local well-posedness was established only in~\cite{Peng2026}. The experiments below show that it already exhibits a distinctive nonlinear phenomenology, including an asymmetric onset boundary, amplitude-dependent thresholds, and self-similar concentration at an energy-critical rate. It therefore provides a minimal setting in which the role of damping structure in small-scale formation can be isolated and quantified.

The mathematical difficulty is set by the Hall term. The parent Hall--magnetohydrodynamics (Hall--MHD) system models plasma dynamics in regimes where the Hall effect becomes relevant at small length scales; relative to classical magnetohydrodynamics, the Hall term carries one additional spatial derivative of the magnetic field, producing a genuinely quasilinear difficulty that dominates both the analysis and the numerics of these equations~\citep{ADFL2011,CDL2014}. EMHD is the fluid-free subsystem obtained by retaining the Hall dynamics while neglecting the ion velocity; it serves as a standard model for small-scale plasma motion and whistler-type phenomena~\citep{BSD1996,BSZD1999}.

We study a $2\frac{1}{2}$-dimensional (2.5D) reduction in which the magnetic field is independent of the vertical coordinate and is written as
\begin{equation}
  \bB \;=\; \curl(a\,\bm{e}_z) + b\,\bm{e}_z
  \;=\; (a_y,\,-a_x,\,b),
  \label{eq:ansatz}
\end{equation}
with $a=a(x,y,t)$ the magnetic potential and $b=b(x,y,t)$ the vertical magnetic component. Allowing distinct fractional dissipation on the two scalar components yields the system
\begin{subequations}\label{eq:emhd}
\begin{align}
  \partial_t a + a_y b_x - a_x b_y &= -\Lal a, \label{eq:emhd-a}\\
  \partial_t b - a_y \Delta a_x + a_x \Delta a_y &= -\Lbe b, \label{eq:emhd-b}
\end{align}
\end{subequations}
on $(0,\infty)\times\T^2$, where $\Lambda^s := (-\Delta)^{s/2}$ is the Fourier multiplier with symbol $|k|^s$, $k\in\Z^2$, and throughout $0<\alpha,\beta<2$. Fractional orders below two interpolate between the undamped and fully resistive regimes and serve as a standard mathematical proxy for weakened or anomalous damping; assigning distinct orders to the two components turns the question of which damping channel binds into a quantitative one.

\paragraph{Related literature.}
A systematic local well-posedness theory for three-dimensional Hall--MHD was developed by~\cite{CDL2014}, with blow-up criteria and small-data results in~\cite{CL2014}; lower-regularity and critical-space theories followed in \citep{DanchinTan2021,DanchinTan2022,LiuTan2021,Dai2021}. The role of weakened, fractional magnetic diffusion was clarified by~\cite{CWW2015}, whose Littlewood--Paley and Besov estimates show that the full Laplacian dissipation can be relaxed to a fractional one. For the fluid-free EMHD subsystem, the Hall nonlinearity carries a derivative falling on the current $\curl\bB$, and the behavior is especially delicate without resistivity: \cite{JeongOh2022} proved ill-posedness near degenerate stationary states, while~\cite{JeongOh2025} established well-posedness around nonzero uniform fields; \cite{Dai2023} studied a resistive $2$D EMHD setting near a steady state. Recently, \cite{DaiBabaei2025} proved local well-posedness of the $2.5$D EMHD system~\eqref{eq:emhd} when {one} of the two equations carries a full Laplacian dissipation.  In~\cite{Peng2026}, the second author established the local-well-posedness under a {split} fractional condition
\begin{equation}
  0<\alpha,\beta<2, \qquad \alpha+\beta>2,
  \label{eq:threshold}
\end{equation}
in which case neither component need carry a full derivative of dissipation: the combined smoothing of the two fractional operators suffices to control the Hall nonlinearity. The mechanism is an exact cancellation between the leading low--high frequency interactions of the two equations, after which the residual terms close in an asymmetric energy in which $a$ is estimated one derivative above $b$. Beyond the deterministic setting, the well-posedness theory has recently been extended to a relaxed EMHD model with random diffusion~\citep{HPY2025} and to the three-dimensional stochastic EMHD system, for which maximal pathwise solutions were constructed~\citep{HPY2026}.

The condition~\eqref{eq:threshold} is a sufficient condition for {local} well-posedness; it closes the known energy estimates but does not, by itself, establish global regularity, nor does it preclude small-scale growth or finite-time blow-up for $\alpha+\beta>2$. Our aim is therefore not to test a global regularity threshold but to ask a sharper, complementary question: when small scales do form, which of the two dissipations controls their onset, and does the symmetric balance $\alpha+\beta$ that governs the analysis also govern the observed numerical onset of current-sheet growth? We address this through a combination of scaling heuristics and resolution-controlled numerical experiments, taking care to distinguish what the computations show (a concentration mechanism and its parametric dependence) from what they cannot settle (the existence of a genuine finite-time singularity). Numerical studies of small-scale and near-singular dynamics are by now a mature tool in the analysis of nonlinear PDEs: tracking complex singularities through spectral data~\citep{SSF1983}, continuation criteria of Beale--Kato--Majda type~\citep{BKM1984}, and high-resolution investigations of near-singular dynamics~\citep{LuoHou2014}. We adopt that toolkit here, with particular attention to separating genuine small-scale structure from under-resolution.

In this work, we combine scaling heuristics with resolution-controlled numerical experiments to study how the split dissipations affect small-scale formation in the $2.5$D EMHD system. The scaling analysis identifies three competing mechanisms: a per-equation criticality, the symmetric sum $\alpha+\beta$ arising from the known cancellation-based local theory, and a possible $\alpha$-controlled mechanism associated with the out-of-plane current $\Delta a$. The numerical experiments point toward the last scenario: the observed current-sheet concentration occurs primarily in $\Delta a$ and appears more sensitive to the dissipation on the magnetic potential than to the sum $\alpha+\beta$. The fitted concentration rate is consistent with the energy-critical scaling $q^*\approx 3$. The scaling analysis is developed in \sect{sec:scaling}; the solver, its validation, and the diagnostic and resolution-control protocol in \sect{sec:method} and \sect{sec:diagnostics}; and the numerical results in \sect{sec:results}.

A structure-preserving forward solver for~\eqref{eq:emhd} based on gradient recovery is developed in the companion work~\citep{GHPY2026}; the present paper is concerned not with method design but with using high-accuracy computation to investigate how the local-theory balance~\eqref{eq:threshold} relates to the observed onset of small-scale growth.

\section{The \texorpdfstring{$2.5$D}{2.5D} EMHD system} \label{sec:structure}

This section introduces the notation and structural identities used throughout the paper: the transport form of the system, the current density, and the magnetic-energy balance.

\subsection{Transport form and the Hall nonlinearity}

Introduce the in-plane velocity $u := \nperp b = (-b_y, b_x)$, which is divergence free, $\Div u = 0$. Since $u\cdot\nabla a = -b_y a_x + b_x a_y = a_y b_x - a_x b_y$, equation~\eqref{eq:emhd-a} is a transport--dissipation equation for the magnetic potential,
\begin{equation} 
\partial_t a + u\cdot\nabla a = -\Lal a, \qquad u = \nperp b . \label{eq:a-transport} 
\end{equation} 
 For the second equation, using that $\Delta$ commutes with $\partial_x,\partial_y$ we have $a_x\Delta a_y - a_y\Delta a_x = a_x(\Delta a)_y - a_y(\Delta a)_x = \{a,\Delta a\}$, where $\{f,g\}:=\nperp f\cdot\nabla g = f_x g_y - f_y g_x$ is the Poisson bracket. Hence
\begin{equation} 
\partial_t b + \nperp a\cdot\nabla(\Delta a) = -\Lbe b . \label{eq:b-transport} 
\end{equation} 
The Hall nonlinearity $\nperp a\cdot\nabla(\Delta a)$ in~\eqref{eq:b-transport} pairs one derivative of $a$ against {three} derivatives of $a$; this is the quasilinear, derivative-losing structure responsible both for the difficulty of the well-posedness theory and for the appearance of $a$ at one higher level of regularity than $b$ in the energy.

\subsection{Current density: the singularity observable}

The current density associated with the ansatz~\eqref{eq:ansatz} is
\begin{equation}
  \bJ = \curl\bB = (b_y,\,-b_x,\,-\Delta a),
  \label{eq:current}
\end{equation}
so that the in-plane current is $\nperp b$ and the out-of-plane current is $-\Delta a$. The scalar
\begin{equation}
  \norm{\bJ}{\Linf}(t) \;\sim\; \max\!\big(\norm{\nabla b}{\Linf},\,
  \norm{\Delta a}{\Linf}\big)
  \label{eq:Jinf}
\end{equation}
is the EMHD analogue of the vorticity maximum in incompressible flow and serves as our primary singularity observable. Tracking $\norm{\nabla b}{\Linf}$ and $\norm{\Delta a}{\Linf}$ separately resolves whether an incipient singularity is of in-plane or out-of-plane (current-sheet) type.

\subsection{Conserved magnetic energy}

Define the magnetic energy
\begin{equation}
  M(t) \;:=\; \tfrac12\int_{\T^2}\!\big(|\nabla a|^2 + b^2\big)\,dx
  \;=\; \tfrac12\int_{\T^2}\! |\bB|^2\,dx .
  \label{eq:energy}
\end{equation}

\begin{proposition}[Magnetic-energy balance]\label{prop:energy}
For smooth solutions of~\eqref{eq:emhd},
\begin{equation}
  \frac{d}{dt}M(t)
  = -\norm{\Lambda^{1+\alpha/2} a}{\Ltwo}^2 - \norm{\Lambda^{\beta/2} b}{\Ltwo}^2
  \;\le\; 0 ,
  \label{eq:energy-balance}
\end{equation}
and in particular $M(t)$ is conserved for the dissipationless system $(\alpha=\beta=0)$.
\end{proposition}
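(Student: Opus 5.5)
The plan is a direct energy computation: test the $a$-equation against $-\Delta a$ and the $b$-equation against $b$, add the two identities, and show that the nonlinear contributions cancel exactly. Since $\frac12\frac{d}{dt}\int|\nabla a|^2 = \int \nabla a\cdot\nabla\partial_t a = -\int \Delta a\,\partial_t a$ on $\T^2$, multiplying \eqref{eq:a-transport} by $-\Delta a$ and integrating gives
\begin{equation*}
\frac12\frac{d}{dt}\norm{\nabla a}{\Ltwo}^2 = \int_{\T^2} (u\cdot\nabla a)\,\Delta a\,dx - \int_{\T^2}\Lal a\,(-\Delta a)\,dx .
\end{equation*}
The dissipative term is evaluated by Plancherel. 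Both $\Lal$ and $-\Delta$ are Fourier multipliers, with symbols $|k|^\alpha$ and $|k|^2$, so the term equals $\sum_k |k|^{2+\alpha}|\hat a_k|^2=\norm{\Lambda^{1+\alpha/2}a}{\Ltwo}^2$. Similarly, multiplying \eqref{eq:b-transport} by $b$ gives
\begin{equation*}
\frac12\frac{d}{dt}\norm{b}{\Ltwo}^2 = -\int_{\T^2} b\,\nperp a\cdot\nabla(\Delta a)\,dx - \norm{\Lambda^{\beta/2}b}{\Ltwo}^2 .
\end{equation*}

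The key step is the cancellation of the two nonlinear integrals. I will write both as Poisson-bracket integrals and use that, on the torus, $\{f,g\}=\nabla\cdot(f\,\nabla^\perp g)$ up to sign. This gives the antisymmetry $\int f\{g,h\} = -\int g\{f,h\}$, and more generally the cyclic identity $\int f\{g,h\}=\int g\{h,f\}=\int h\{f,g\}$.

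1. For the first term: $u\cdot\nabla a = \nperp b\cdot\nabla a = \{b,a\}$, so it equals $\int \Delta a\,\{b,a\}$.
2. For the second term: it is $-\int b\,\{a,\Delta a\}$.
3. Cyclicity gives $\int b\{a,\Delta a\} = \int \Delta a\{b,a\}$.

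Hence the two nonlinear contributions are equal and opposite, and summing the identities yields \eqref{eq:energy-balance}. The only care needed is consistent sign bookkeeping with the paper's convention $\nperp f = (-f_y,f_x)$, which I will check directly. Concretely, $\{f,g\}=f_xg_y-f_yg_x=\partial_x(fg_y)-\partial_y(fg_x)$, so the integral over $\T^2$ of any bracket vanishes. Expanding $\{fg,h\}=f\{g,h\}+g\{f,h\}$ then gives the antisymmetry, and combining this with $\{g,f\}=-\{f,g\}$ gives cyclicity.

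The identity $M(t)=\frac12\int|\bB|^2$ follows from $|\bB|^2=a_y^2+a_x^2+b^2$. Nonpositivity of the right-hand side is immediate. Setting $\alpha=\beta=0$ in the computation means dropping the dissipation terms altogether (not $\Lambda^0=I$): the derivation without the dissipative terms gives $dM/dt=0$.

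The main obstacle is just the sign-accurate verification of step 3. Smoothness, together with periodicity, justifies all integrations by parts with no boundary terms.
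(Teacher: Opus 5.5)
Your proposal is correct and follows essentially the same route as the paper. You test the $a$-equation against $-\Delta a$ and the $b$-equation against $b$, evaluate the dissipative terms by Plancherel, and cancel the nonlinear terms; your cyclic identity $\int b\{a,\Delta a\}=\int\Delta a\{b,a\}$ is a repackaging of the paper's single integration by parts using $\Div\nperp a=0$. Your remark that the case $\alpha=\beta=0$ must be read as dropping the dissipative terms, rather than setting $\Lambda^0=I$, is also a correct clarification of the statement.
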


\begin{proof}
Differentiating~\eqref{eq:energy} and integrating by parts,
\[
  \frac{d}{dt}M = \int \nabla a\cdot\nabla\partial_t a + \int b\,\partial_t b
  = -\int \Delta a\,\partial_t a + \int b\,\partial_t b .
\]
Insert $\partial_t a = -u\cdot\nabla a - \Lal a$ and $\partial_t b = -\nperp a\cdot\nabla\Delta a - \Lbe b$. The dissipative terms give $\int \Delta a\,\Lal a = -\norm{\Lambda^{1+\alpha/2}a}{\Ltwo}^2$ and $-\int b\,\Lbe b = -\norm{\Lambda^{\beta/2}b}{\Ltwo}^2$. It remains to show the nonlinear terms cancel. The transport term contributes $\int \Delta a\,(u\cdot\nabla a)$, while the Hall term contributes $-\int b\,\nperp a\cdot\nabla\Delta a = \int \Delta a\,(\nperp a\cdot\nabla b)$, where we integrated by parts using $\Div\nperp a = 0$. Now $u\cdot\nabla a = \nperp b\cdot\nabla a = -(a_x b_y - a_y b_x)$ and $\nperp a\cdot\nabla b = a_x b_y - a_y b_x$, so the two integrands are $\int\Delta a\,[-(a_xb_y-a_yb_x)]$ and $\int\Delta a\,(a_xb_y-a_yb_x)$, which sum to zero. This proves~\eqref{eq:energy-balance}.
\end{proof}

Proposition~\ref{prop:energy} plays two roles below. Analytically, $M$ is the base coercive quantity on which the higher-order well-posedness energy of~\cite{Peng2026} is built. Numerically, the structural identity behind $\tfrac{d}{dt}M=0$, the exact cancellation of the two nonlinear contributions, provides a stringent discrete check (\sect{subsec:validation}): the time integrator need not conserve $M$ exactly as a discrete invariant, but any implementation that fails to reproduce this nonlinear energy cancellation to round-off has the wrong nonlinear structure and cannot be trusted in the near-critical regime.

\section{Scaling, criticality, and competing predictions}\label{sec:scaling}

This section derives the self-similar scaling of the dissipationless system, identifies the criticality of each fractional dissipation, and extracts three competing threshold predictions that the experiments will distinguish.

\subsection{Self-similar scaling of the dissipationless system}

Consider the dissipationless system
\begin{equation}
  \partial_t a + \nperp b\cdot\nabla a = 0, \qquad
  \partial_t b + \nperp a\cdot\nabla\Delta a = 0 .
  \label{eq:inviscid}
\end{equation}

\begin{proposition}[Scaling family]\label{prop:scaling}
If $(a,b)$ solves~\eqref{eq:inviscid}, then for every $\lambda>0$ and every $q\in\R$ so do the rescaled fields
\begin{equation}
  a_\lambda(x,t) = \lambda^{p}\,a\!\Big(\frac{x}{\lambda},\frac{t}{\lambda^{q}}\Big),
  \qquad
  b_\lambda(x,t) = \lambda^{r}\,b\!\Big(\frac{x}{\lambda},\frac{t}{\lambda^{q}}\Big),
  \qquad
  p = 3-q, \quad r = 2-q .
  \label{eq:scaling}
\end{equation}
\end{proposition}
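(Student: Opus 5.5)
The plan is a direct substitution argument. Write $X = x/\lambda$ and $T = t/\lambda^{q}$, and compute how each term of \eqref{eq:inviscid} scales under \eqref{eq:scaling} using the chain rule. For a generic term, count the powers of $\lambda$: an amplitude factor $\lambda^{p}$ or $\lambda^{r}$, a factor $\lambda^{-1}$ for every spatial derivative, and a factor $\lambda^{-q}$ for the time derivative. We then need each equation to be homogeneous, meaning both of its terms carry the same power of $\lambda$. Given that, the rescaled equation is $\lambda^{(\cdot)}$ times the original equation evaluated at $(X,T)$, and therefore vanishes.

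For the $a$-equation, note that $\partial_t a_\lambda = \lambda^{p-q}(\partial_t a)(X,T)$. The transport term satisfies $\nperp b_\lambda\cdot\nabla a_\lambda = \lambda^{r-1}\lambda^{p-1}(\nperp b\cdot\nabla a)(X,T)$. Homogeneity therefore requires $p-q = p+r-2$, i.e. $r = 2-q$. Since $r = 2-q$ is exactly the exponent in \eqref{eq:scaling}, this holds.

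For the $b$-equation, $\partial_t b_\lambda = \lambda^{r-q}(\partial_t b)(X,T)$. The Hall term $\nperp a_\lambda\cdot\nabla\Delta a_\lambda$ contains two factors of $a$ and four spatial derivatives in total: one from $\nperp$ and three from $\nabla\Delta$. It therefore scales as $\lambda^{2p-4}(\nperp a\cdot\nabla\Delta a)(X,T)$. Homogeneity requires $r - q = 2p-4$. Substituting $r = 2-q$ gives $2-2q = 2p-4$, i.e. $p = 3-q$, again matching \eqref{eq:scaling}.

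With both exponent relations verified, each rescaled equation equals a single power of $\lambda$ times the original equation at $(X,T)$, so $(a_\lambda,b_\lambda)$ solves \eqref{eq:inviscid} for every $\lambda>0$ and every $q\in\R$.

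No step is a genuine obstacle. The only care needed is in counting derivatives in the Hall term, and in noting that on $\T^2$ the rescaled fields live on the dilated torus $\lambda\T^2$. The statement should be read either on $\R^2$ or with that change of period understood; the computation is purely local, so this does not affect it.
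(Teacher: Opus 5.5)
Your proposal is correct and follows the same power-counting argument as the paper: matching $\lambda^{p-q}$ against $\lambda^{p+r-2}$ in the $a$-equation gives $r=2-q$, and matching $\lambda^{r-q}$ against $\lambda^{2p-4}$ in the $b$-equation then gives $p=3-q$. You also note that the rescaled fields are naturally periodic on the dilated torus $\lambda\T^2$, so the statement is best read on $\R^2$ or with the period rescaled; this is a valid refinement that the paper leaves implicit.
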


\begin{proof}
Write $\xi = x/\lambda$, $\sigma = t/\lambda^q$. In~\eqref{eq:emhd-a} with no dissipation, $\partial_t a_\lambda \sim \lambda^{p-q}$ while $\nperp b_\lambda\cdot\nabla a_\lambda \sim \lambda^{(r-1)+(p-1)} = \lambda^{p+r-2}$; equality of exponents forces $r = 2-q$. In the second equation, $\partial_t b_\lambda \sim \lambda^{r-q}$ while $\nperp a_\lambda\cdot\nabla\Delta a_\lambda \sim \lambda^{(p-1)+(p-3)} = \lambda^{2p-4}$; equality forces $r-q = 2p-4$, \ie\ $p = 3-q$ after substituting $r=2-q$.
\end{proof}

Two consequences of~\eqref{eq:scaling} fix the relevant observables. First, the current scales isotropically across its in-plane and out-of-plane parts: $\Delta a_\lambda \sim \lambda^{p-2} = \lambda^{1-q}$ and $\nabla b_\lambda \sim \lambda^{r-1} = \lambda^{1-q}$, so
\begin{equation}
  \norm{\bJ}{\Linf} \sim \lambda^{1-q},
  \label{eq:J-scaling}
\end{equation}
and a focusing collapse (length scale $\lambda\to0$ with $\norm{\bJ}{\Linf}\to \infty$) requires $q>1$. This identifies $\bJ$ as the natural normalizing quantity for dynamic rescaling (\sect{subsec:rescaling}). Second, since $p = r+1$, the two contributions to the magnetic energy scale identically, $M\sim\lambda^{2p} = \lambda^{6-2q}$, consistent with Proposition~\ref{prop:energy}. The exponent $6-2q$ separates three regimes: for $q<3$ the magnetic energy of the collapsing core {vanishes} as $\lambda\to0$ (energy-subcritical collapse), $q=3$ is the energy-critical scaling at which $M$ is exactly scale-invariant, and $q>3$ would require a diverging core energy and is incompatible with the conserved, finite $M$. The measured exponent $q^*\approx 3$ in \sect{sec:results} thus sits at the energy-critical value, a point we return to below.

For later use we denote the temporal power laws implied by a self-similar collapse with exponent $q$. If the length scale $L(t)$ closes at the singular time $t^*$ in the self-similar fashion of \sect{subsec:rescaling}, then $t^*-t\sim L^{q}$, so $L\sim(t^*-t)^{1/q}$; combining with~\eqref{eq:J-scaling} and $\delta\sim L$ for the analyticity-strip width gives
\begin{equation}
  \norm{\bJ}{\Linf}\sim(t^*-t)^{-\gamma_J},\quad
  \delta(t)\sim(t^*-t)^{\nu},\qquad
  \gamma_J=\frac{q-1}{q},\quad \nu=\frac1q,\qquad
  q = 1 + \frac{\gamma_J}{\nu}.
  \label{eq:exponent-relation}
\end{equation}
The individual exponents $\gamma_J$ and $\nu$ depend on the (extrapolated) singular time: over a narrow fitting window, misplacing $t^*$ approximately rescales both fitted exponents by a similar factor, so the ratio $\gamma_J/\nu$, and hence $q$ through~\eqref{eq:exponent-relation}, is far more robust than either exponent alone. We use this in \sect{sec:results}.

\subsection{Criticality of the fractional dissipation}

Restoring the dissipation, $\Lal a_\lambda \sim \lambda^{p-\alpha}$ is to be compared with the inviscid balance $\lambda^{p-q}$ in~\eqref{eq:emhd-a}: the $a$-dissipation is scale \emph{marginal} when $\alpha = q$, scale \emph{subcritical} (negligible at small scales, unable to arrest collapse) when $\alpha<q$, and \emph{supercritical} (regularizing) when $\alpha>q$. The identical computation for~\eqref{eq:emhd-b} gives marginality at $\beta=q$. Thus, at a self-similar collapse with exponent $q$, dimensional analysis applied to each equation {in isolation} suggests that blow-up can proceed only if both $\alpha<q$ and $\beta<q$, a split-dependent, corner-shaped condition in the $(\alpha,\beta)$ square.

\subsection{Three readings, and the role of the out-of-plane current}

The corner prediction conflicts with the analytical threshold~\eqref{eq:threshold}, which closes on $\alpha+\beta>2$ rather than on $\max(\alpha,\beta)$. One resolution lies in the cancellation identified in~\citep{Peng2026}: the leading low--high frequency interactions of the two equations cancel exactly, so the two fractional dissipations are not independently tasked with regularizing their own equation. Instead they supply a \emph{joint} smoothing budget, of order $\tfrac{\alpha}{2}$ on $a$ and $\tfrac{\beta}{2}$ on $b$ beyond the energy level, set against the single one-derivative loss of the Hall nonlinearity. Regularity then closes when the combined gain exceeds the loss,
\begin{equation}
  \tfrac{\alpha}{2} + \tfrac{\beta}{2} > 1
  \;\Longleftrightarrow\;
  \alpha + \beta > 2 ,
  \label{eq:budget}
\end{equation}
pointing to the line $\alpha+\beta = 2$ rather than the corner $\{\alpha<q\}\cap\{\beta<q\}$.

There is, however, a third reading that the numerics will favor. The current density~\eqref{eq:current} has an out-of-plane part $-\Delta a$ that carries {two} derivatives of $a$, against one derivative of $b$ in the in-plane part $\nperp b$; the scaling~\eqref{eq:J-scaling} loads both equally, but the higher derivative count makes $\Delta a$ the more singular component. Crucially, $\Delta a$ is regularized only by the $a$-equation dissipation $\Lal$. If blow-up concentrates in the out-of-plane current, that is, in a current sheet in $\Delta a$, then the binding constraint is $\alpha$ alone, and the onset boundary should be {near-vertical} in the $(\alpha,\beta)$ plane, controlled chiefly by the dissipation on the magnetic potential. These three readings (corner, line, and $\alpha$-controlled) make distinct, falsifiable predictions that our experiments are designed to separate.

The condition $\alpha+\beta>2$ is sufficient for local well-posedness \citep{Peng2026}; for sufficiently weak dissipation we expect small scales to form and the out-of-plane current to concentrate. We test, in particular, whether the onset of this growth is governed by the symmetric line $\alpha+\beta=2$ or instead primarily by the single exponent $\alpha$, as the out-of-plane current argument suggests, and we seek the self-similar exponent $q^*$ of the observed concentration. We emphasize that local well-posedness for $\alpha+\beta>2$ does not preclude small-scale growth or later blow-up there, so $\alpha+\beta=2$ should be read as the threshold of the present {theory}, not as a proven global regularity boundary.

\section{Numerical method}\label{sec:method}

We integrate~\eqref{eq:emhd} using a Fourier pseudospectral discretization in space and exponential time differencing in time. This treats the linear fractional dissipation exactly at the modal level while advancing the Hall nonlinearity explicitly. This section describes the scheme and its validation.

\subsection{Spatial discretization}

We discretize~\eqref{eq:emhd} by a Fourier pseudospectral method on $\T^2 = [0,2\pi)^2$ with $N^2$ collocation points and spectral resolution $|k_x|,|k_y|\le N/2$. Linear operators act diagonally in Fourier space: the fractional dissipations are the multipliers $|k|^\alpha$ and $|k|^\beta$, and all derivatives are evaluated spectrally. The Hall nonlinearities $a_y b_x - a_x b_y$ and $a_y\Delta a_x - a_x\Delta a_y$ are formed as physical-space products of spectrally computed derivatives and transformed back, with aliasing errors removed by the $2/3$ rule~\citep{Orszag1971}. In the most marginal runs we replace the sharp $2/3$ cutoff by a smooth high-order exponential filter and verify that reported diagnostics are insensitive to the choice.

\subsection{Time integration}

Because the fractional dissipation is linear, diagonal, and stiff, we integrate it exactly through an integrating factor and advance the Hall nonlinearity explicitly with the fourth-order exponential time-differencing Runge--Kutta scheme ETDRK4~\citep{CoxMatthews2002,KassamTrefethen2005}. Writing the system as $\partial_t \hat w = \mathcal{L}\hat w + \mathcal{N}(\hat w)$ with $\mathcal{L} = \mathrm{diag}(-|k|^\alpha,-|k|^\beta)$, the modewise integrating factors are $e^{-|k|^\alpha\Delta t}$ and $e^{-|k|^\beta\Delta t}$; the ETDRK4 coefficient ($\varphi$-)functions are evaluated by the contour-integral quadrature of~\cite{KassamTrefethen2005} to avoid cancellation error at small $|k|$. The time step is chosen from the stability constraint of the explicit treatment of the Hall term, which is {dispersive}: the linearized nonlinearity supports whistler-type waves with frequency $\omega\sim |k|^2\,\|\bB\|$, so explicit stability requires $\Delta t \lesssim c/(k_{\max}^2\,\|\bB\|_\infty)$. This dispersive restriction, rather than the comparatively mild fractional stiffness ($\sim N^{\alpha}$ with $\alpha<2$), is the binding one, and it tightens as small scales form; we choose $\Delta t$ conservatively for each run and stop integrating once the resolution gate of \sect{subsec:resolution} trips.

\subsection{Validation}\label{subsec:validation}

We validate the solver by four independent tests, summarized in Table~\ref{tab:validation} and Figure~\ref{fig:validation}.

\emph{(V1a) Linear dissipation.} A single Fourier mode decays at the exact rate $e^{-|k|^\alpha t}$; the scheme reproduces this to relative error $3.6\times 10^{-16}$, confirming the fractional symbols.

\emph{(V1b) Conservation structure.} For the dissipationless system the instantaneous nonlinear energy rate $\frac{d}{dt}M$ of Proposition~\ref{prop:energy} must vanish. The two contributions from the $a$- and $b$-equations cancel to $2.8\times 10^{-16}$ (with and without dealiasing), verifying the signs and structure of the Hall nonlinearity. With dissipation restored, the discrete energy balance $M(0)-M(T)=\int_0^T D\,dt$, $D=\norm{\Lambda^{1+\alpha/2}a}{\Ltwo}^2+\norm{\Lambda^{\beta/2}b}{\Ltwo}^2$, holds to relative residual $5.5\times 10^{-7}$ (quadrature-limited). We note that the dissipationless system is genuinely ill-posed at the grid scale (the third-derivative Hall term is uncontrolled), so this invariant is verified on \emph{resolved} evolutions; it is the energy balance, not exact conservation, that the solver respects in the dissipative regime of interest.

\emph{(V2) Convergence.} Self-convergence in a smooth, resolved regime confirms fourth-order temporal accuracy (measured rates $4.01$ and $3.94$) and spectral spatial accuracy (error $7.4\times10^{-4}$, $1.8\times10^{-5}$, $1.5\times10^{-7}$ at $N=16,32,64$).

\begin{table}[htbp]
\centering
\caption{Solver validation. Each test compares a measured quantity against the
behavior expected for an ETDRK4 Fourier pseudospectral scheme: the first two
tests probe the exact treatment of the linear and inviscid structure (near
machine precision), the third the discrete energy balance under dissipation, and
the last two the temporal and spatial convergence. All tests pass at the
expected levels, establishing that the diagnostics reported below reflect the
dynamics rather than the discretization.}
\label{tab:validation}
\small
\setlength{\tabcolsep}{4.5pt}
\begin{tabular}{llll}
\toprule
Test & Measured quantity & Expected & Result \\
\midrule
Linear decay (V1a)        & rel.\ error in mode amplitude & mach.\ prec. & $3.6\times10^{-16}$\\
Conservation structure (V1b) & instantaneous $dM/dt$ (inviscid) & mach.\ prec. & $2.8\times10^{-16}$\\
Energy balance (V1b)      & $|M(0){-}M(T)-\!\int\! D\,dt|/|{\Delta M}|$ & quad.\ accuracy & $5.5\times10^{-7}$\\
Temporal order (V2)       & fitted convergence rate & $4$ (ETDRK4) & $4.01,\ 3.94$\\
Spatial accuracy (V2)     & max error, $N=16{\to}64$ & spectral & $7.4{\times}10^{-4}\ \text{to}\ 1.5{\times}10^{-7}$\\
\bottomrule
\end{tabular}
\end{table}


\begin{figure}[htbp]
  \centering
  \includegraphics[width=0.86\linewidth]{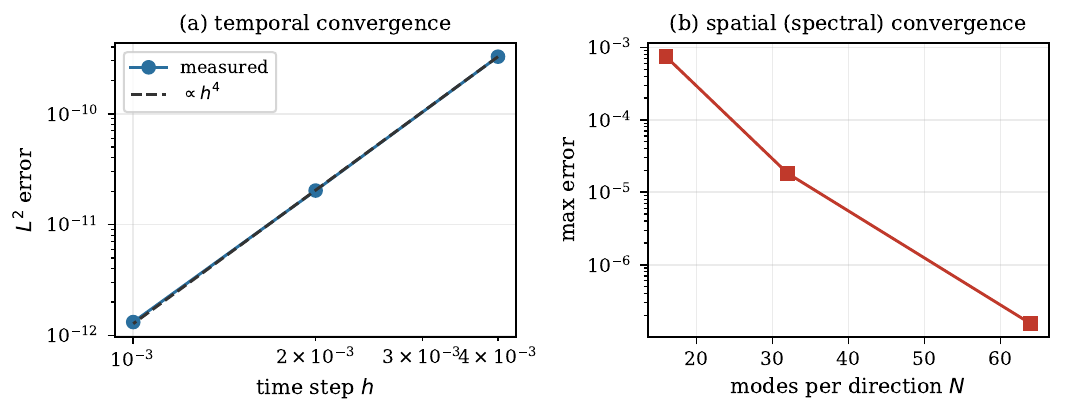}
  \caption{Solver convergence. (a) Fourth-order temporal accuracy of ETDRK4. (b) Spectral spatial accuracy. Together with the machine-precision conservation tests these establish the solver as a trustworthy instrument before any singularity claim.}
  \label{fig:validation}
\end{figure}

\section{Singularity diagnostics and resolution control}\label{sec:diagnostics}

The central difficulty of any numerical study of near-singular dynamics is to distinguish genuine small-scale concentration from under-resolution. We address this with a hierarchy of diagnostics required to agree on a common putative singular time $t^*$, together with a resolution-control protocol.

\subsection{Diagnostics}\label{subsec:diagnostics}

\begin{enumerate}[leftmargin=1.6em,itemsep=3pt]
\item \textbf{Analyticity-strip width $\delta(t)$.} For an analytic field the energy spectrum decays exponentially, $\log E(k,t) \sim C(t) - 2\delta(t)\,k$ at high wavenumber, where $\delta(t)$ is the distance from the real axis to the nearest complex-space singularity~\citep{SSF1983}. A real singularity at $t^*$ is signalled by $\delta(t)\downarrow 0$, typically as $\delta(t)\sim c\,(t^*-t)^\nu$. We extract $\delta(t)$ by least-squares fitting $\log E(k,t)$ over the band in which the spectrum lies between $10^{-3}$ and $10^{-10}$ of its peak (above round-off, below the energy-containing scales).

\item \textbf{Current-density growth.} We track $\norm{\bJ}{\Linf}(t)$ and its in-plane and out-of-plane parts $\norm{\nabla b}{\Linf}$, $\norm{\Delta a}{\Linf}$, fitting an algebraic ansatz $\norm{\bJ}{\Linf}\sim(t^*-t)^{-\gamma_J}$. In practice we monitor the componentwise proxy of~\eqref{eq:Jinf}, which lower-bounds the pointwise supremum of $|\bJ|$ by at most a factor $\sqrt2$; in the runs reported the two differ by at most $11\%$ and exhibit the same growth behavior.

\item \textbf{Sobolev energies.} We monitor $E_s(t) = \norm{a}{H^{s+1}}^2 + \norm{b}{\Hs}^2$ with $s=2$, the numerical analogue of the asymmetric energy in which the local theory of~\cite{Peng2026} closes, both to connect the experiments to the analysis and to detect loss of regularity.

\item \textbf{Beale--Kato--Majda (BKM) integral.} A continuation criterion of BKM type~\citep{BKM1984} suggests monitoring $\int_0^t \norm{\bJ}{\Linf}\,d\tau$; its growth provides an additional consistency check for the fitted $t^*$.

\item \textbf{Dissipation channels and the $H^s$ budget.} The local theory controls the nonlinearity by the two dissipation norms
\begin{equation}
  D_a(t)=\norm{\Lambda^{1+\alpha/2}a}{\Hs}^2,\qquad
  D_b(t)=\norm{\Lambda^{\beta/2}b}{\Hs}^2 .
  \label{eq:channels}
\end{equation}
We monitor these separately, together with the nonlinear inputs to $\tfrac12\frac{d}{dt}E_s$, $\mathcal N_a=-\langle a,\,\nabla^\perp b\cdot\nabla a\rangle_{H^{s+1}}$ and $\mathcal N_b=-\langle b,\,\nabla^\perp a\cdot\nabla\Delta a\rangle_{\Hs}$, and the budget ratio $(\mathcal N_a+\mathcal N_b)/(D_a+D_b)$. The ratio is signed: the numerator is negative when the nonlinearity transfers energy out of the $H^s$ class (in the runs reported this occurs only at the first sample, and negligibly). A ratio persistently above one means the nonlinearity outruns the dissipative capacity in the energy class of the local theory. Throughout, the Sobolev norms are implemented with the inhomogeneous Fourier weights $(1+|k|^2)^{s}$; the energy identity $\frac{d}{dt}E_s = 2(\mathcal N_a+\mathcal N_b)-2\big(\norm{\Lambda^{\alpha/2} a}{H^{s+1}}^2+\norm{\Lambda^{\beta/2}b}{\Hs}^2\big)$ is exact relative to these weights (and holds up to the equivalence of Sobolev weights otherwise), and we verified the implementation against it to a relative $5\times10^{-7}$ by centered differencing along the discrete flow.

\item \textbf{Dyadic shell energies.} Since the local estimates are proved by Littlewood--Paley decomposition, with $\alpha+\beta>2$ entering as a positive summability margin in the dyadic sums, we track the weighted shell energies
\begin{equation}
  E_q(t)=\lambda_q^{2s}\big(\norm{\Lambda\Delta_q a}{L^2}^2
        +\norm{\Delta_q b}{L^2}^2\big),\qquad \lambda_q=2^q,
  \label{eq:shells}
\end{equation}
where the sharp annular projections $2^q\le|k|<2^{q+1}$ stand in for smooth Littlewood--Paley blocks. The profile of $E_q$ across $q$ monitors the dyadic summability underlying the $H^{s+1}\times\Hs$ estimates.

\item \textbf{Cancellation defect.} With the weights $(1+|k|^2)$ on $a$ and $1$ on $b$, the two nonlinear inputs cancel exactly, $\mathcal N_a+\mathcal N_b=0$ at $s=0$: this combines the magnetic-energy cancellation of \sect{sec:structure} with incompressibility of the drift. We monitor the relative defect $r_0(t)=|\mathcal N_a+\mathcal N_b|/(|\mathcal N_a|+|\mathcal N_b|)$ of this identity along the discrete flow. We emphasize that this tests the nonlinear magnetic-energy cancellation at the base energy level, not the full higher-order cancellation of low--high interactions in the Littlewood--Paley proof. It is a structural check that the discretization preserves the base-level cancellation on which the local theory is built, and it doubles as a resolution indicator: the defect sits at round-off while the solution is spectrally resolved and departs from it only when the truncation shell becomes populated.
\end{enumerate}

\subsection{Resolution-control protocol}\label{subsec:resolution}

The diagnostics of \sect{subsec:diagnostics} can each be mimicked by under-resolution: spectral energy piling up at the grid scale can produce spurious current growth and an apparent collapse of the analyticity strip, both artifacts of truncation rather than features of the continuous dynamics. The purpose of this subsection is to make the distinction operational, by fixing in advance the conditions that a reported onset must meet. A current-sheet onset is reported only if it survives the following controls.

\begin{itemize}[leftmargin=1.6em,itemsep=3pt]
\item \emph{Resolution gate.} The solution is trusted only while the analyticity strip remains resolved, $\delta(t)\gtrsim C\,\Delta x$ (several grid points per strip width), equivalently while the spectrum has decayed to round-off before $k_{\max}$. Any extrapolated singular time $t^*$ is inferred from the \emph{resolved} window, never by integrating into the under-resolved regime.

\item \emph{Resolution ladder.} Representative runs are repeated at increasing resolutions with $\Delta t$ reduced accordingly; the trusted-window diagnostics must converge under refinement, and the peak intensity must not decrease with $N$. Only the resolved interval is reported.

\item \emph{Aliasing insensitivity.} Results hold under both the $2/3$ rule and a high-order exponential filter.

\item \emph{Cross-diagnostic consistency.} The signals $\delta(t)\to0$, $\norm{\bJ}{\Linf}\to\infty$, and $E_s(t)\to\infty$ are required to indicate the \emph{same} $t^*$, with exponents consistent with the scaling relations of \sect{sec:scaling}.
\end{itemize}

\subsection{Dynamic rescaling}\label{subsec:rescaling}

To characterize the self-similar structure of a candidate collapse one may use the McLaughlin--Papanicolaou--Sulem--Sulem dynamic-rescaling strategy~\citep{MPSS1986}, normalizing by the current. We outline it here, as it motivates the static exponent estimates of \sect{sec:results}. Introducing a renormalized time $\tau$ through $d\tau = L(\tau)^{-q}\,dt$ and rescaled fields adapted to~\eqref{eq:scaling}, with $L(\tau)$ pinned by the condition $\norm{\bJ}{\Linf}\equiv 1$ in rescaled variables, the exponent $q=q(\tau)$ is determined adaptively and would converge to a value $q^*$ if a self-similar profile exists; a finite-time singularity then corresponds to $t^* = \int_0^\infty L(\tau)^{q}\,d\tau < \infty$. The reduced equations in rescaled variables are recorded in Appendix~\ref{app:rescaling}.

\section{Numerical results}\label{sec:results}

We now report the numerical experiments. All runs follow the protocol of \sect{sec:diagnostics}, and all quantitative claims refer to the trusted window defined there.

\subsection{Validation and numerical setup}\label{subsec:setup}

The solver passes all tests of \sect{subsec:validation} (Table~\ref{tab:validation}, Figure~\ref{fig:validation}).

For reproducibility we record the precise setup used in the remainder of this section. The initial data are the smooth, band-limited shear-type fields
\begin{equation}
\begin{aligned}
  a_0(x,y) &= \cos x + \tfrac12\cos 2y + \tfrac3{10}\sin(x+y), \\
  b_0(x,y) &= \sin y + \tfrac12\sin 2x + \tfrac3{10}\cos(x-y),
\end{aligned}
\label{eq:ic}
\end{equation}
on $\T^2=[0,2\pi)^2$. Each field is made mean-free and then rescaled by a common factor, so that $\max(\|a_0\|_\infty,\|b_0\|_\infty)=\epsilon$. The amplitude $\epsilon$ and the pair $(\alpha,\beta)$ are the only quantities varied, leaving the dissipation as the sole control parameter. Aliasing is removed by the $2/3$ rule (with the smooth exponential-filter variant used as a cross-check in the most marginal runs). The time step is fixed per run at the dispersive-CFL value of \sect{sec:method},
\begin{equation}
  \Delta t = \frac{c_{\rm CFL}}{k_{\max}^2\,\max(\epsilon,1)},
  \qquad k_{\max}=N/3,\quad c_{\rm CFL}=0.18 .
\end{equation}
Because the whistler frequency involves the solution amplitude, which is not constant along a run, we verified the adequacy of this fixed step directly: halving $\Delta t$ in the singular reference run changes the reported diagnostics by less than $0.2\%$ over the trusted window. Diagnostics are sampled every $20$ steps; a run is integrated until the resolution gate trips, i.e.\ until the analyticity strip falls to the grid scale, $\delta(t)<\Delta x$, after which the solution is no longer trusted.

A run is classified \emph{singular} when the current grows by more than a factor $1.6$ while the strip collapses below $2\,\Delta x$, \emph{regular} when the current never grows by more than a factor $1.25$, and \emph{marginal} otherwise. Current growth is the primary classifier; the strip condition reinforces the singular label rather than defining the regular one, because for weakly dissipated runs the saturated strip can settle below $2\,\Delta x$ without collapsing further, while the current shows no growth at all. In our data the two criteria never conflict: every run with growth above $1.6$ also shows strip collapse toward the gate. The thresholds are deliberately conservative; the singular and regular populations are well separated from them (\eg\ the regular cases of the phase diagram show essentially no current growth, with $\max_t\norm{\bJ}{\Linf}$ exceeding $\norm{\bJ}{\Linf}(0)$ by at most a few percent and typically not at all), so the classification is insensitive to the precise cut-offs.

\subsection{The \texorpdfstring{$(\alpha,\beta)$}{(alpha,beta)} phase diagram: an asymmetric, \texorpdfstring{$\alpha$}{alpha}-controlled boundary}
\label{subsec:phase}

Fixing the initial data and sweeping $(\alpha,\beta)$, we classify each run as \emph{singular} (the out-of-plane current grows and the analyticity strip collapses to the grid scale), \emph{regular} (the current decays and the strip saturates), or \emph{marginal}, using the resolution-controlled diagnostics of \sect{sec:diagnostics}. Figure~\ref{fig:phase}(a) shows the outcome at a representative amplitude. The singular region occupies small $\alpha$ and the regular region large $\alpha$, separated by a near-vertical onset band; the symmetric line $\alpha+\beta=2$ cuts \emph{across} the boundary rather than tracing it.

The asymmetry is sharpest along our sampled points on the line $\alpha+\beta=2$ (Table~\ref{tab:line}): holding the sum fixed and increasing $\alpha$, the outcome runs from singular through marginal to regular. Two runs with identical sum, $(\alpha,\beta)=(0.4,1.6)$ and $(1.6,0.4)$, fall on opposite sides of the onset band. To localize the transition we sampled four additional points on the line; along $\alpha+\beta=2$ the transition occurs within the sampled interval $0.85\le\alpha\le1.15$, centered on the marginal point $\alpha=1$. The sum does not appear to determine the onset; the dissipation $\alpha$ on the magnetic potential does. The growth itself is monotone along the line, decreasing from $3.0$ at $\alpha=0.4$ to no growth at all for $\alpha\ge1.3$ (Table~\ref{tab:line}). This matches the third reading of \sect{sec:scaling}: because the concentration occurs in the out-of-plane current $\Delta a$, which only $\Lal$ regularizes, $\alpha$ is the candidate binding parameter.

The $\alpha$-controlled asymmetry is not tied to the shear-type family~\eqref{eq:ic}. As a robustness check along the line $\alpha+\beta=2$ (not a recomputation of the full phase diagram), we repeated the five original line points with band-limited random-phase initial data (independent Gaussian modes with $1\le|k|\le3$, mean-free, rescaled to the same amplitude $\epsilon=3$); this reproduces the same sequence along the line: singular at $\alpha=0.4$ and $0.7$ (growth $2.4$ and $1.9$), marginal at $\alpha=1$ (growth $1.4$), and no growth at $\alpha=1.3$ and $1.6$.

\begin{table}[htbp]
\centering
\caption{Classification and current growth across our sampled points on the line $\alpha+\beta=2$ at fixed data ($N=96$, $\epsilon=3$). The same sum yields opposite outcomes, and the transition occurs within the sampled interval $0.85\le\alpha\le1.15$, indicating that the observed onset boundary is $\alpha$-controlled rather than symmetric.}
\label{tab:line}
\small
\begin{tabular}{cccccccccc}
\toprule
$\alpha$ & $0.4$ & $0.55$ & $0.7$ & $0.85$ & $1.0$ & $1.15$ & $1.3$ & $1.45$ & $1.6$\\
$\beta$  & $1.6$ & $1.45$ & $1.3$ & $1.15$ & $1.0$ & $0.85$ & $0.7$ & $0.55$ & $0.4$\\
\midrule
outcome & sing. & sing. & sing. & sing. & marg. & reg. & reg. & reg. & reg.\\
$\max_t\norm{\bJ}{\Linf}/\norm{\bJ}{\Linf}(0)$
        & $3.0$ & $2.6$ & $2.2$ & $1.9$ & $1.5$ & $1.1$ & $1.0$ & $1.0$ & $1.0$\\
\bottomrule
\end{tabular}
\end{table}

\begin{figure}[htbp]
  \centering
  \includegraphics[width=\linewidth]{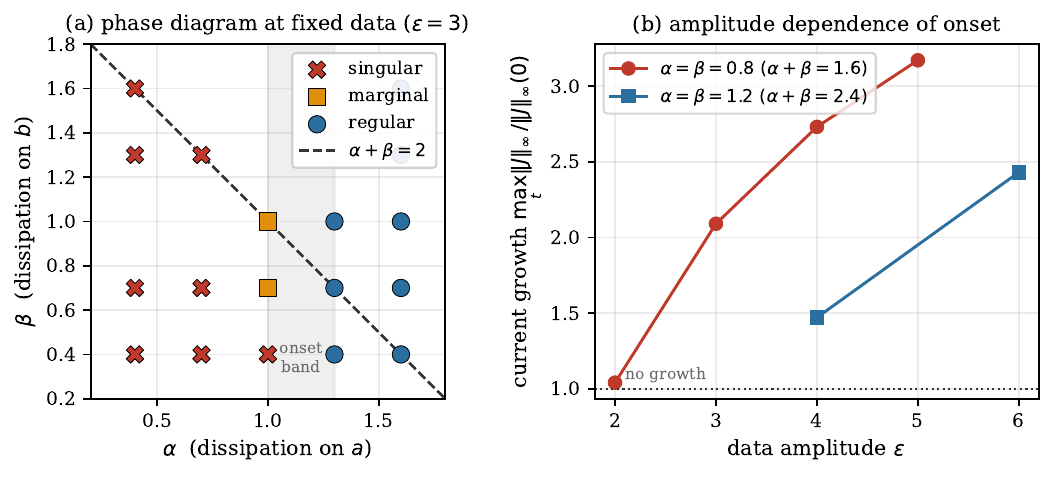}
  \caption{(a) Phase diagram in the $(\alpha,\beta)$ plane at fixed initial data. Singular points (crosses) occupy small $\alpha$, regular points (circles) large $\alpha$, with a near-vertical onset band; the line $\alpha+\beta=2$ (dashed) does not separate them. (b) The onset is amplitude dependent: current growth versus data amplitude $\epsilon$ for two families of equal split. A case regular at small amplitude becomes singular as the data grows.}
  \label{fig:phase}
\end{figure}

The onset boundary is amplitude dependent (Figure~\ref{fig:phase}(b)): a case that is regular at small amplitude becomes singular at larger amplitude, and the boundary advances toward larger exponents as the data grows. This is the expected behavior of a fixed-data onset boundary lying below an all-data threshold. The amplitude-dependent growth is not in conflict with the local well-posedness result for $\alpha+\beta>2$, which does not assert global regularity or uniform suppression of small-scale amplification; how close to the analytical threshold the boundary ultimately advances is left open.

\subsection{The singular mechanism and a resolution ladder}

Figure~\ref{fig:dichotomy} contrasts a sub-threshold run ($\alpha=\beta=0.8$, $\alpha+\beta=1.6$) with a super-threshold run ($\alpha=\beta=1.3$, $\alpha+\beta=2.6$) at the same data, the fields~\eqref{eq:ic} at amplitude $\epsilon=3$; the same pair of runs is used throughout this subsection and the next two. In the singular case the analyticity strip collapses from $\mathcal{O}(1)$ to the grid scale (panel a) while the current maximum grows (panel b); in the regular case the strip saturates well above the grid scale and the current decays monotonically. Panel (c) identifies the mechanism: the out-of-plane current $\norm{\Delta a}{\Linf}$ dominates the in-plane current $\norm{\nabla b}{\Linf}$ throughout, so the incipient singularity is a sheet in $\Delta a$.

Refining the grid sharpens rather than softens the concentration: the peak current rises from $12.5$ at $N=96$ to $15.4$ at $N=128$, and the strip reaches the same fraction ($\approx\!1.1$ grid cells) of the smaller mesh at the higher resolution. While under-resolution can in principle generate spurious oscillations, the increase of the peak current under refinement, taken together with the consistent analyticity-strip collapse and the spectral broadening of Figure~\ref{fig:selfsim}(c), is consistent with a resolved concentration mechanism rather than a discretization artifact. Beyond the point at which the strip reaches the grid scale the run is no longer trusted; the apparent turnover of the current in Figure~\ref{fig:dichotomy}(b) marks this resolution limit, not the dynamics. Figure~\ref{fig:field} shows the characteristic sheet structure of $-\Delta a$ near the end of the trusted window.

\begin{figure}[htbp]
  \centering
  \includegraphics[width=\linewidth]{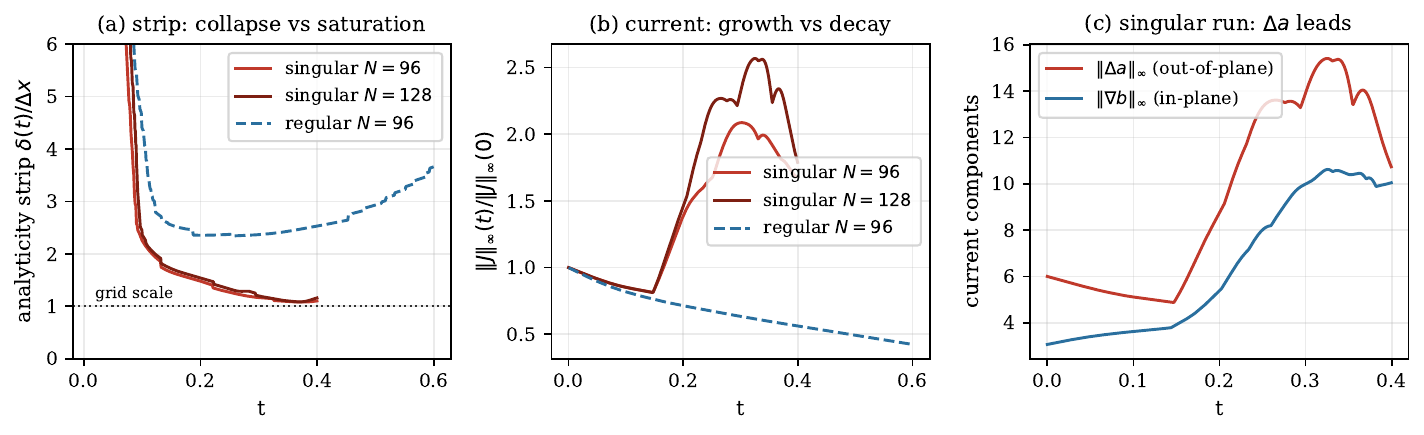}
  \caption{Singular versus regular dynamics. (a) Analyticity-strip width $\delta(t)/\Delta x$: collapse to the grid scale (singular, two resolutions) versus saturation (regular). (b) Current maximum, normalized: growth that increases with resolution (singular) versus monotone decay (regular); the turnover of the singular curves marks the resolution limit. (c) In the singular run the out-of-plane current $\Delta a$ leads the in-plane current $\nabla b$, identifying a current-sheet collapse.}
  \label{fig:dichotomy}
\end{figure}

\begin{figure}[htbp]
  \centering
  \includegraphics[width=0.50\linewidth]{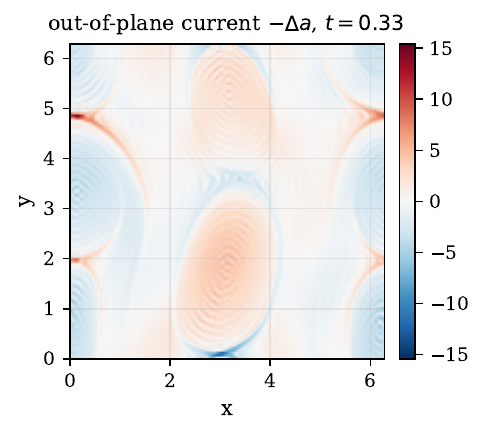}
  \caption{Out-of-plane current $-\Delta a$ near the end of the trusted window of the singular run, showing concentration into a sheet.}
  \label{fig:field}
\end{figure}

\subsection{Energy-budget and dyadic-shell diagnostics}
\label{subsec:crossdiag}

The remaining diagnostics of \sect{subsec:diagnostics} confirm the picture and tie it to the energy mechanism of the local theory (Figures~\ref{fig:esbkm} and~\ref{fig:mechanism}); the same quantities admit a direct physical reading as the natural description of anisotropic small-scale energy transfer between the two magnetic components. The energy $E_2(t) = \norm{a}{H^{3}}^2 + \norm{b}{H^{2}}^2$, the numerical analogue of the asymmetric energy in which the local theory of~\cite{Peng2026} closes, grows by a factor of about $8$ at $N=96$ and about $14$ at $N=128$ over the trusted window of the singular run, with the growth accelerating as the strip approaches the grid scale and \emph{increasing} under refinement; in the regular run it decays monotonically, by a factor of about $7$ over the longer integration. The Beale--Kato--Majda integral $\int_0^t\norm{\bJ}{\Linf}\,d\tau$ is convex in the singular run and concave in the regular one, consistent with accelerating current growth in the former and relaxation in the latter. All signals of the protocol of \sect{subsec:resolution} (strip collapse, current growth, Sobolev-energy growth, convexity of the BKM integral) thus turn on together over the same window of the singular run, and all of them relax in the regular run.

\begin{figure}[htbp]
  \centering
  \includegraphics[width=0.92\linewidth]{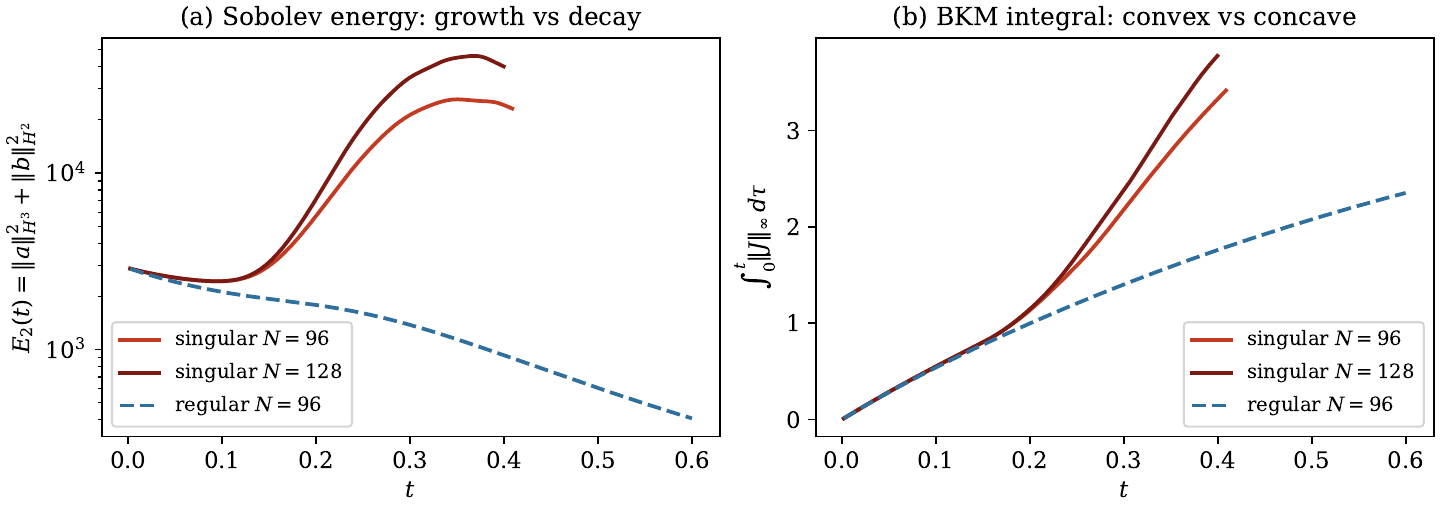}
  \caption{Cross-diagnostic consistency for the runs of Figure~\ref{fig:dichotomy}. (a) The well-posedness Sobolev energy $E_2(t)$ grows in the singular run, faster at higher resolution, and decays monotonically in the regular run. (b) The BKM integral $\int_0^t\norm{\bJ}{\Linf}\,d\tau$ is convex (singular) versus concave (regular).}
  \label{fig:esbkm}
\end{figure}

Figure~\ref{fig:mechanism} resolves this energy balance into the ingredients of the local estimates. Panel (a) shows the two dissipation channels~\eqref{eq:channels}. In the regular run both channels decay after an initial transient, and the $H^2$ budget ratio of panel (c) stays below one throughout (maximum $0.88$): the dissipation absorbs the nonlinear input, as the local estimates require. In the singular run the budget crosses one at $t\approx0.09$, before the current growth onset at $t\approx0.14$, and reaches $2.25$: from that point the nonlinearity outruns the combined dissipative capacity in the energy class, and $E_2$ grows accordingly. The $a$-channel carries the larger dissipative load initially ($D_a/D_b\approx3$); as the sheet forms, the two channels become comparable, so once the concentration is underway neither channel alone suffices. This does not contradict the $\alpha$-controlled onset: the concentration is initiated in the $\Delta a$ current channel, which only $\Lal$ regularizes, while the subsequent proof-level energy balance involves both dissipative channels.

Panel (b) shows the dyadic shell energies~\eqref{eq:shells} of the singular run. The weighted profile inverts: by the end of the trusted window $E_q$ \emph{increases} with $q$, with $E_5/E_1$ rising from $2\times10^{-3}$ at $t=0.1$ to about $50$, so the dyadic sums behind the $H^{s+1}\times\Hs$ estimates lose summability from the top shells. This is consistent with the dyadic high-frequency failure mode against which the condition $\alpha+\beta>2$ supplies the summability margin in the local theory. In the regular run (not shown) the weighted profile remains steeply decreasing, $E_5/E_1\le2\times10^{-2}$ throughout, and the top shells peak near $t\approx0.26$ before decaying by one to two orders of magnitude.

Finally, the cancellation defect $r_0(t)$ confirms that the discretization preserves the inter-equation structure: in the $N=128$ singular run it stays below $10^{-15}$ while the strip exceeds $2\,\Delta x$ and rises only to about $10^{-12}$ at the resolution gate. The defect departs from round-off precisely when the truncation shell becomes populated, so it independently supports the trusted window.

\begin{figure}[htbp]
  \centering
  \includegraphics[width=\linewidth]{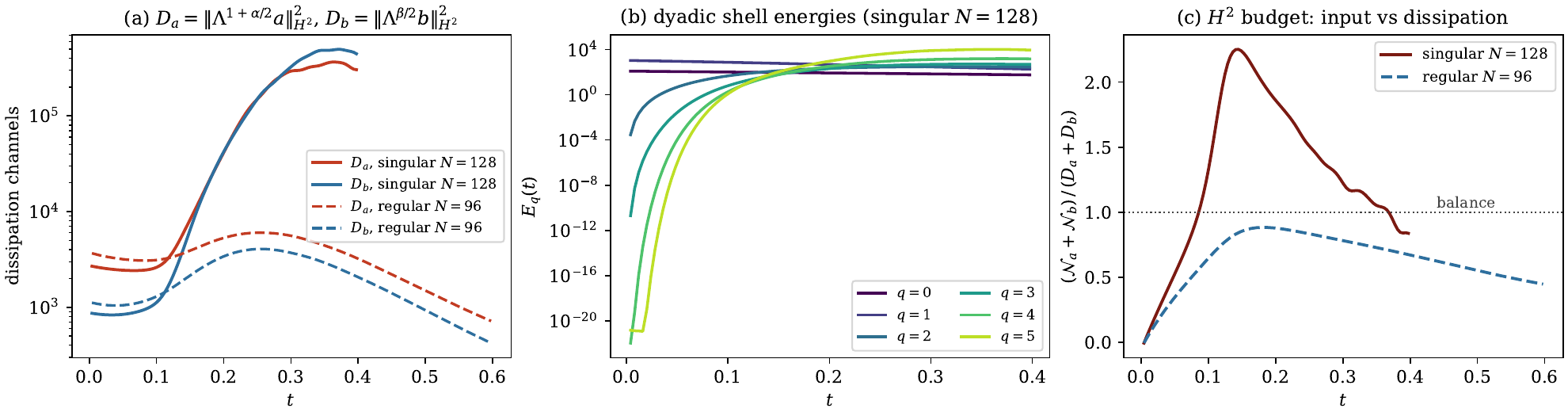}
  \caption{Proof-level energetics ($s=2$) for the runs of Figure~\ref{fig:dichotomy}: singular run $\alpha=\beta=0.8$ at $N=128$ (solid), regular run $\alpha=\beta=1.3$ at $N=96$ (dashed), both at $\epsilon=3$. (a) Dissipation channels $D_a$, $D_b$ of the local theory: both decay in the regular run and grow in the singular run. (b) Dyadic shell energies $E_q$ of the singular run: the weighted profile inverts, with the highest shells dominant by the end of the trusted window. (c) Signed ratio of nonlinear input to dissipation in the $H^2$ energy balance; values above one mean the nonlinearity outruns the combined dissipation. The ratio stays below one in the regular run and crosses one at $t\approx0.09$ in the singular run, before the onset of current growth at $t\approx0.14$.}
  \label{fig:mechanism}
\end{figure}

\subsection{Self-similar concentration and the exponent
\texorpdfstring{$q^*\approx3$}{q*=3}}

On the resolved growth window of the singular run the current and the strip follow power laws in $t^*-t$ (Figure~\ref{fig:selfsim}a,b), where $t^*$ is an extrapolated putative singular time. The fit window is necessarily narrow: it is bounded below by the resolution gate (its lower end abuts strip widths of one to two grid cells) and above by the onset of growth, so $t^*-t$ spans only a factor of about $1.5$. Within it, the individual exponents $\gamma_J$ (current) and $\nu$ (strip) are only loosely determined and trade off against $t^*$: the measured values $\gamma_J=2.39$ and $\nu=1.16$ at $N=128$ exceed the self-similar predictions $\gamma_J=(q^*-1)/q^*\approx0.67$ and $\nu=1/q^*\approx0.33$ of~\eqref{eq:exponent-relation} by a common factor of about $3.5$. This discrepancy is consistent with sensitivity to the extrapolated value of $t^*$ over a narrow fitting window (see the discussion after~\eqref{eq:exponent-relation}), rather than necessarily indicating a failure of the self-similar scaling. Their ratio, through \begin{equation} 
q^* = 1 + \gamma_J/\nu , 
\end{equation} 
is by the same token robust, giving $q^*=3.03$ at $N=96$ and $q^*=3.06$ at $N=128$. The value $q^*\approx 3$ is distinguished: it is the energy-critical scaling at which the conserved magnetic energy is invariant, $M\sim\lambda^{6-2q}=\lambda^{0}$ (Proposition~\ref{prop:scaling} and~\eqref{eq:J-scaling}). The observed concentration thus proceeds, to within the fit, at the energy-preserving self-similar rate: the magnetic energy neither concentrates nor disperses under the rescaling, while the current $\norm{\bJ}{\Linf}\sim\lambda^{1-q^*}\sim\lambda^{-2}$ grows. The spectra of Figure~\ref{fig:selfsim}(c) are consistent with the picture: as time advances the exponential tail extends and shallows, \ie\ the analyticity strip closes.

\begin{figure}[htbp]
  \centering
  \includegraphics[width=\linewidth]{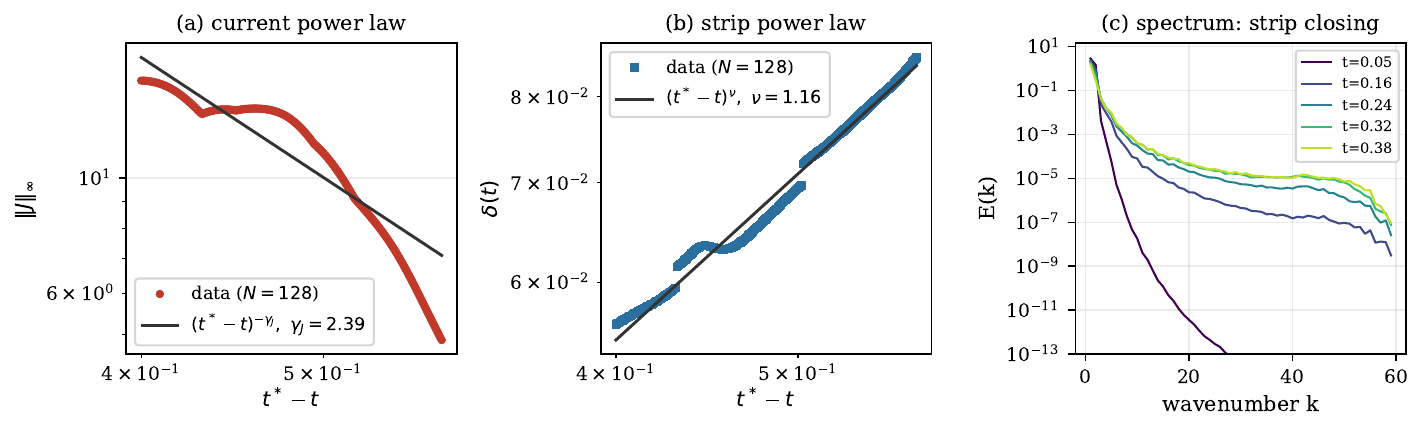}
  \caption{Self-similar structure of the singular run ($\alpha=\beta=0.8$, $\epsilon=3$, $N=128$). (a) Current maximum and (b) analyticity-strip width versus $t^*-t$, with fitted power laws; the ratio yields a resolution-robust $q^*\approx 3$. (c) Energy spectra at successive times: the exponential tail extends and shallows as the analyticity strip closes.}
  \label{fig:selfsim}
\end{figure}

A full determination of $q^*$ through dynamic rescaling, and a tight separation of $\gamma_J$ and $\nu$, would require following the collapse substantially closer to $t^*$ at much higher resolution; we regard the resolution-robust estimate $q^*\approx 3$ and its energy-scaling interpretation as the most reliable conclusions from these fits.

\section{Conclusion} \label{sec:conclusion}

In this paper, we have studied the role of split fractional dissipation in the $2.5$D EMHD system by combining scaling analysis with numerical experiments under controlled grid refinement. Taking the local well-posedness condition $\alpha+\beta>2$ as a reference balance, we compared three candidate pictures: a split-dependent corner, the symmetric line $\alpha+\beta=2$, and an $\alpha$-controlled boundary tied to the out-of-plane current. The computations favor the third picture: across sampled points on the line $\alpha+\beta=2$, the behavior changes from concentration at small $\alpha$ to decay at large $\alpha$. Since the sum $\alpha+\beta$ is fixed along this line, this transition cannot be attributed to the combined dissipation alone; it indicates instead that the dissipation on the magnetic potential is the controlling parameter. The concentrating runs form an out-of-plane current sheet in $\Delta a$, sharpen under grid refinement, and show consistent collapse of the analyticity strip, growth of the asymmetric Sobolev energy, inversion of weighted dyadic shell energies, and nonlinear transfer exceeding the combined dissipative budget. On the resolved window, the growth is consistent with an energy-critical self-similar rate $q^*\approx 3$. These findings provide resolution-controlled numerical evidence for a concentration mechanism, but not a proof of finite-time singularity.

The observed asymmetry also suggests a practical implication for EMHD and Hall--MHD computation. Within this reduced split-dissipation model, distributing the same nominal damping budget differently between the two magnetic components can lead to qualitatively different small-scale behavior, with damping of the magnetic potential playing the decisive role in gating current-sheet formation.

\section*{Acknowledgements}
This work was partially supported by the ONR grant \#N00014-24-1-2432, the Simons Foundation (MP-TSM-00002783), and the NSF grant DMS-2420988.

\bibliographystyle{plainnat}
\bibliography{references}

\FloatBarrier
\appendix
\section{Rescaled equations}\label{app:rescaling}

Following \sect{subsec:rescaling}, introduce the rescaled space and time variables $\xi = (x-x^*)/L(\tau)$ and $d\tau = L(\tau)^{-q}\,dt$, and the rescaled fields consistent with the scaling~\eqref{eq:scaling},
\begin{equation}
  a(x,t) = L^{\,3-q}\,A(\xi,\tau), \qquad b(x,t) = L^{\,2-q}\,B(\xi,\tau).
\end{equation}
Writing $\kappa := (\ln L)_\tau$ for the renormalized growth rate, a direct computation transforms the dissipationless system~\eqref{eq:inviscid} into the autonomous form
\begin{subequations}\label{eq:rescaled}
\begin{align}
  \partial_\tau A + \nperp B\cdot\nabla A
    &= \kappa\big[\,\xi\cdot\nabla A - (3-q)A\,\big], \\
  \partial_\tau B + \nperp A\cdot\nabla(\Delta A)
    &= \kappa\big[\,\xi\cdot\nabla B - (2-q)B\,\big],
\end{align}
\end{subequations}
where all spatial operators are now in $\xi$. The scale $L(\tau)$ is fixed by the normalization that the rescaled current have unit amplitude, $\norm{\bJ_\xi}{\Linf}\equiv 1$, which by~\eqref{eq:J-scaling}, where $\bJ_{\rm phys}\sim L^{1-q}\bJ_\xi$, amounts to $L = \norm{\bJ}{\Linf}^{1/(1-q)} = \norm{\bJ}{\Linf}^{-1/(q-1)}$ (so that, since $q>1$, the scale $L\to 0$ as the current intensifies); the exponent $q=q(\tau)$ is determined adaptively by requiring the rescaled fields to remain stationary in a chosen norm, in the manner of~\citep{MPSS1986}. A self-similar collapse corresponds to a $\tau$-independent profile $(A,B)$ with $q(\tau)\to q^*$ and constant $\kappa$; the finite singular time is $t^* = \int_0^\infty L(\tau)^{q}\,d\tau<\infty$. 


\end{document}